\documentclass[conference,10pt]{IEEEtran}

\usepackage{blindtext, graphicx,booktabs}
\usepackage{cite}
\usepackage{amsmath}
\usepackage{amssymb}
\usepackage{caption}
\usepackage[table,xcdraw]{xcolor}
\usepackage{subcaption}
\newtheorem{lemma}{Lemma}
\newtheorem{theorem}{Theorem}
\newcommand{\ignore}[1]{}
\addtolength{\textfloatsep}{-3.9mm}
\setlength{\abovedisplayskip}{2mm}
\setlength{\belowdisplayskip}{2mm}
\usepackage{algorithm}
\usepackage{algpseudocode}

\begin{document}
\title{Optimal Vehicle Dimensioning for Multi-Class Autonomous Electric Mobility On-Demand Systems}
\author{\IEEEauthorblockN{Syrine Belakaria$^*$, Mustafa Ammous$^*$, Sameh Sorour$^*$ and Ahmed Abdel-Rahim$^\dag$$^\ddag$}
\IEEEauthorblockA{$^*$Department of Electrical and Computer Engineering,
University of Idaho,
Moscow, ID, USA\\ $^\dag$Department of Civil and Environmental Engineering,
University of Idaho,
Moscow, ID, USA\\$^\ddag$National Institute for Advanced Transportation Technologies, University of Idaho, Moscow, ID, USA\\
Email: \{ammo1375, bela7898\}@vandals.uidaho.edu, \{samehsorour, ahmed\}@uidaho.edu}
}

\maketitle
\begin{abstract}
Autonomous electric mobility on demand (AEMoD) has recently emerged as a cyber-physical system aiming to bring automation, electrification, and on-demand services for the future private transportation market. The expected massive demand on such services and its resulting insufficient charging time/resources prohibit the use of centralized management and full vehicle charging. A fog-based multi-class solution for these challenges was recently suggested, by enabling per-zone management and partial charging for different classes of AEMoD vehicles. This paper focuses on finding the optimal vehicle dimensioning for each zone of these systems in order to guarantee a bounded response time of its vehicles. Using a queuing model representing the multi-class charging and dispatching processes, we first derive the stability conditions and the number of system classes to guarantee the response time bound. Decisions on the proportions of each class vehicles to partially/fully charge, or directly serve customers are then optimized so as to minimize the vehicles in-flow to any given zone. Excess waiting times of customers in rare critical events, such as limited charging resources and/or limited vehicles availabilities, are also investigated. Results show the merits of our proposed model compared to other schemes and in usual and critical scenarios.
\end{abstract}
\begin{IEEEkeywords}
Autonomous Mobility On-Demand; Electric Vehicles; Fog-based Architecture; Dimensioning; In-flow; Charging; Queuing Systems.
\end{IEEEkeywords}
\IEEEpeerreviewmaketitle
\section{Introduction}
Urban transportation systems are facing tremendous challenges nowadays due to the dominant dependency and massive increases on private vehicle ownership, which result in dramatic increases in road congestion, parking demand \cite{ref5}, increased travel times \cite{ref6}, and carbon footprint \cite{ref3} \cite{ref4}. These challenges can be mitigated with the significant advances and gradual maturity of vehicle electrification, autonomous driving (10 million expected vehicles by 2020 \cite{ref11}), vehicle fast charging infrastructure, and most importantly cyber-physical systems capable of connecting all such components as well as customers to computing engines that can smartly exploit these resources. With the rapid development of such cyber-physical systems, it is strongly forecasted that vehicle ownership will significantly decline by 2025, and will be replaced by the novel concept Autonomous Electric Mobility on-Demand (AEMoD) services \cite{ref9,ref10}. In such system, customers will simply need to press some buttons on an app to promptly get an autonomous electric vehicle transporting them door-to-door, with no pick-up/drop-off and driving responsibilities, no dedicated parking needs, no carbon emission, no vehicle insurance and maintenance costs, and extra in-vehicle work/leisure times. AEMoD systems successfully exhibiting all these qualities will significantly prevail in attracting millions of subscribers across the world, and providing on-demand and hassle-free private urban mobility.\\ 
\indent Despite the great aspirations for wide AEMoD service deployments by early-to-mid next decade, the stability and timeliness (and thus success) of such service is threatened by two major bottlenecks. First, the expected massive demand of AEMoD services will result in excessive if not prohibitive computational and communication delays if cloud/centralized based approaches are employed for the micro-operation of such systems over an entire city. Moreover, the typical full-battery charging rates of electric vehicles will not be able to cope with the gigantic numbers of vehicles involved in these systems, thus resulting in instabilities and unbounded customer delays. Recent works \cite{ref1,ref12} have addressed some important problems related to autonomous mobility on-demand systems but none of them considered the computational architecture for a massive demand on such services, nor the vehicle electrification and charging limitations. \\
\indent In our prior work \cite{ref13}, a fog-based architecture \cite{ref14} with multi-class operation, and possible partial charging was proposed to handle these two problems. The fog-based architecture distributes the micro-management of AEMoD vehicles to zone controllers that are close to customers and their most likely allocated vehicles, thus reducing communications and computation loads and delays. The per-zone multi-class operation with partial charging pairs customers with vehicles either having the proper charge or needing a partial charge of their batteries to fulfill the customers' requested trips. The number of classes is selected to balance the proportions between the customer demands, vehicle in-flow, and charging resources of each zone. Decisions on the proportions of vehicles of each class to dispatch or partially/fully charge were optimized in this work to minimize the \emph{response time} of the system.\\
\indent While the proposed architecture, multi-class operation, and joint dispatching and charging optimization framework in \cite{ref13} seems very promising, the study assumed a constant vehicle in-flow to each zone. Though this typical by the active vehicle in-flow to the system (in-flow of vehicles dropping customers in this zone), the zone demand may require more (less) vehicles at any given time of the day, which may call for relocating excess vehicles from (to) neighboring zones. One one hand, serving customers within bounded response times can be guaranteed by injecting more vehicles to each zone. On the other hand, one of the key goals of AEMoD systems is to reduce the congestion. Therefore, determining the optimal number of needed vehicles (a.k.a vehicle dimensioning) to stably serve each zone with bounded response time guarantees is very crucial factor in the operation and key goals of AEMoD systems. In addition, such systems need to be resilient and maintain their stability in special conditions like low charging resources, limited vehicles availability, etc.\\
\indent In this paper, we address the above vehicle dimensioning problem with bounded response time guarantees for the fog-based multi-class AEMoD management system proposed in \cite{ref13}. Using a queuing model representing the multi-class charging and dispatching processes of each zone, we first derive the stability conditions and the number of system classes to guarantee the response time bound. Decisions on the proportions of each class vehicles to partially/fully charge, or directly serve customers are then optimized so as to minimize total needed vehicles in-flow to any given zone. Excess waiting times of customers in rare critical events, such as limited charging resources and/or limited vehicles availabilities, will be also investigated.

\section{System Model and Parameters} \label{sec:model}
\indent We consider one service zone controlled by a fog controller connected to: (1) the service request apps of customers in the zone; (2) the AEMoD vehicles; (3) $C$ rapid charging poles distributed in the service zone and designed for short-term partial charging; and (4) one spacious central charging station designed for long-term full charging. Active AEMoD vehicles enter the service in this zone after dropping off their latest customers in it. Their detection as free vehicles by the zone's controller can thus be modeled as a Poisson process with rate $\lambda_v $. Customers request service from the system according to a Poisson process. Both customers and vehicles are classified into $n$ classes based on an ascending order of their required trip distance and the corresponding suitable SoC for this trip, respectively. From the thinning property of Poisson processes, the arrival process of Class $i$ customers and vehicles, $i \in \{0,\dots, n\}$, are both independent Poisson processes with rates $\lambda_c^{(i)}$ and $\lambda_v p_i$, where $p_i$ is the probability that the SoC of an arriving vehicle to the system belongs to Class $i$. Note that $p_0$ is the probability that a vehicle arrives with a depleted battery, and is thus not able to serve customers immediately. Consequently, $\lambda^{(0)}_c = 0$ as no customer will request a vehicle that cannot travel any distance. On the other hand, $p_n$ is also equal to 0, because no active vehicle can arrive to the system fully charged as it has just finished a prior trip.\\
\ignore{
\begin{equation}\label{eq:1}
\begin{aligned}
\sum ^{n-1}_{i=0}p_i = 1 , \; 0 \leq p_{i} \leq 1 , \; i = 0, \ldots, n-1.
\end{aligned}
\end{equation}
}
\begin{figure}[t]
\centering
 \includegraphics[width=.55\textwidth]{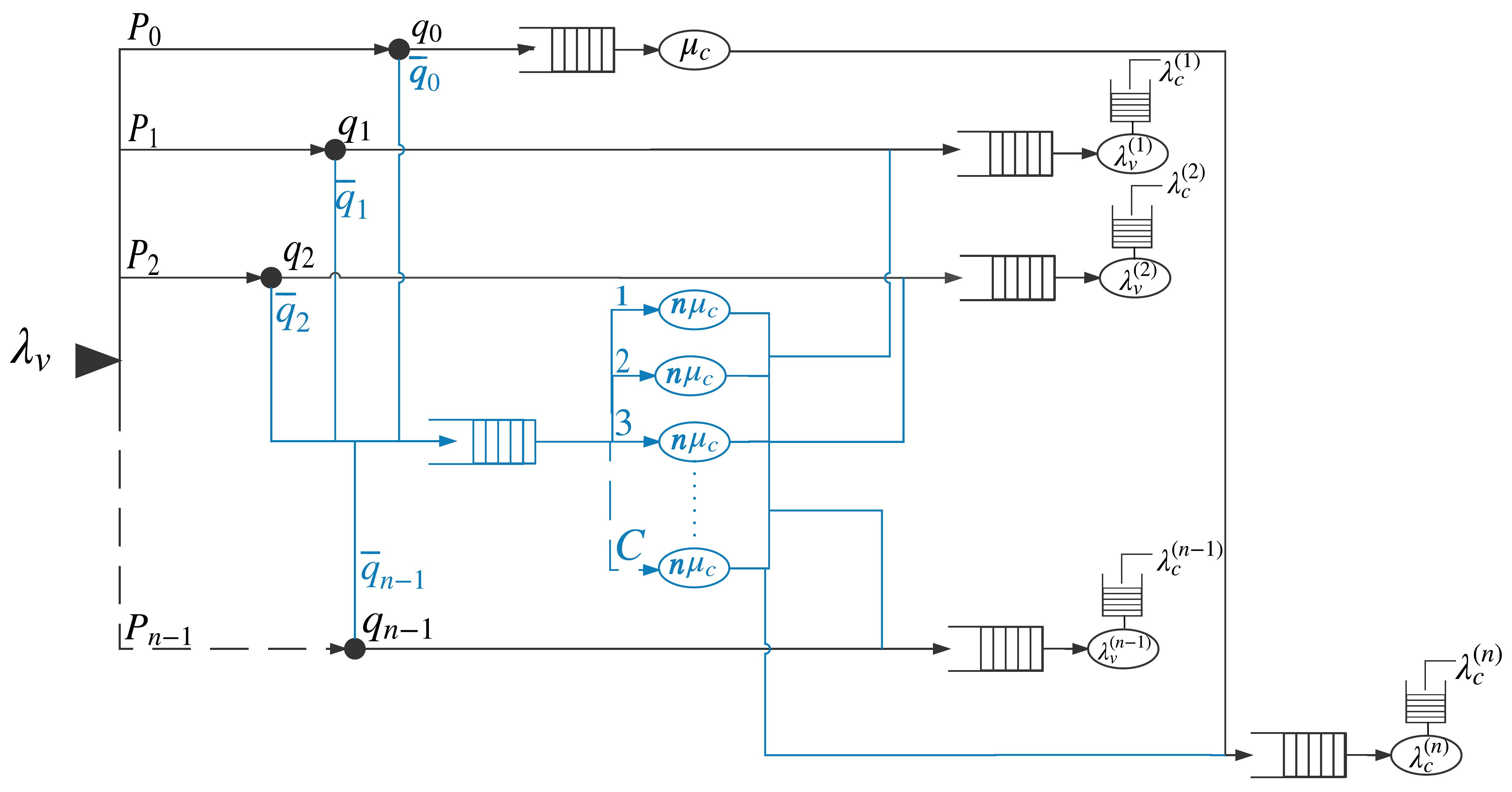}
\caption{Joint dispatching and partially/fully charging model, abstracting one service zone of an AEMoD system.}\label{fig:model}
   \end{figure}
\indent Upon arrival, each vehicle of Class $i$, $i\in\{1,\dots,n-1\}$, will park anywhere in the zone until it is directed by the fog controller to either: (1) dispatch to serve a customer from Class $i$ with probability $q_i$; or (2) partially charge up to the SoC of Class $i+1$ at any of the $C$ charging poles (whenever any of them becomes free), with probability $\overline{q}_i=1-q_i$, before parking again in waiting to serve a customer from Class $i+1$. As for Class $0$ vehicles that are incapable of serving before charging,\ignore{ since they cannot serve any customer class without charging,} they will be directed to either fully charge at the central charging station with probability $q_0$, or partially charge at one of the $C$ charging points with probability $\overline{q}_0 = 1-q_0$. In the former and latter cases, the vehicle after charging will wait to serve customers of Class $n$ and $1$, respectively.\\
\ignore{
\begin{equation}\label{eq:2}
\begin{aligned}
q_i  + \overline{q_i} =1 , \; 0 \leq q_{i} \leq 1  , \; i = 0, \ldots, n-1.
\end{aligned}
\end{equation}
}
\indent As widely used in the literature (e.g., \cite{ref15,ref16}), the full charging time of a vehicle with a depleted battery is assumed to be exponentially distributed with rate $\mu_c$, to model the random charging duration of different battery sizes. Given a uniform SoC quantization among the $n$ vehicle classes, the partial charging time can then be modeled as an exponential random variable with rate $n\mu_c$. Note that the larger rate of the partial charging process is not due to a speed-up in the charging process but rather due to the reduced time of partial charging.\ignore{ The use of exponentially distributed charging times for charging electric vehicles has been widely used in the literature \cite{ref15,ref16} to model the randomness in the charging duration of the different battery sizes.} The customers belonging to Class $i$, arriving at rate $\lambda_c^{(i)}$, will be served at a rate of $\lambda_v^{(i)}$, which includes the vehicle in-flow that: (1) arrived to the zone with a SoC belonging to Class $i$ and were directed to wait to serve Class $i$ customers; or (2) arrived to the zone with a SoC belonging to Class $i-1$ and were directed to partially charge to be able to serve Class $i$ customers.\\
\indent Given the above description and modeling of variables\ignore{ of the vehicle dispatching and charging variables and options}, the entire zone dynamics can thus be modeled by the queuing system depicted in Fig.\ref{fig:model}. This system includes $n$ M/M/1 queues for the $n$ classes of customer service, one M/M/1 queue for the central charging station, and one M/M/C queue representing the partial charging process at the $C$ charging points.\\
\indent Our goal in this paper is to minimize the needed rate of vehicle in-flow $\lambda_v$ to the entire zone with respect to the arrival rate of customers in order to guarantee an average response time limit for customers of every class. By response time, we mean the time elapsed between the instant when an arbitrary customer requests a vehicle, and the instant when a vehicle starts moving from its parking or charging spot towards this customer. We will also shade light on the potential dimensioning and/or response time relaxation solutions for system resilience in extreme cases of very low energy resources and limited actual vehicle in-flow.

\section{System Stability and Response Time Limit Conditions} \label{sec:stability}
In this section, we first deduce the stability conditions of the proposed system using the basic laws of queuing theory. We will also derive a lower bound on the number of classes $n$ that fits the customer demands, average response time limit, and charging capabilities of any arbitrary service zone.
As shown in Fig. \ref{fig:model}, each of the $n$ customer classes is served by a separate queue of vehicles having a vehicle in-flow rate $ \lambda_v^{(i)}$. Consequently, $ \lambda_v^{(i)}$ represents the service rate of the customer arrival in the $i^{th}$ queue.
From the aforementioned vehicle dispatching and charging dynamics in Section \ref{sec:model}, illustrated in Fig. \ref{fig:model}, these service rates can be expressed as:
\begin{equation}\label{eq:1}
\begin{aligned}
& & &\lambda_v^{(i)} = \lambda_v(p_{i-1}\overline{q}_{i-1} + p_{i}q_{i}) , \; i = 1, \ldots, n-1.\\
& & &\lambda_v^{(n)} = \lambda_v(p_{n-1}\overline{q}_{n-1} + p_{0}q_{0})\\
\end{aligned}
\end{equation}
Since $\overline{q}_{i} + q_{i} = 1$, $\overline{q}_{i}$ can be substituted by  $1 - q_{i}$ to get:
\begin{equation}\label{eq:2}
\begin{aligned}
& & &\lambda_v^{(i)} = \lambda_v(p_{i-1} - p_{i-1}{q_{i-1}} + p_{i}q_{i}) , \; i = 1, \ldots, n-1\\
& & &\lambda_v^{(n)} = \lambda_v(p_{n-1} - p_{n-1}{q_{n-1}} + p_{0}q_{0}) \ignore{, \; i = n}\\
\end{aligned}
\end{equation}
From the well-known stability condition of an M/M/1 queue, we must have: 
\begin{equation}\label{eq:3}
\begin{aligned}
& & & \lambda_v^{(i)} > \lambda_c^{(i)} , \; i = 1, \ldots, n 
\end{aligned}
\end{equation}
It is also established from M/M/1 queue analysis that the average response (i.e., service) time for any customer in the $i$-th class can be expressed a:
\begin{equation} \label{eq:5}
\dfrac {1}{\lambda_v^{(i)}-\lambda_c^{(i)}}
\end{equation}
To guarantee customers' satisfaction, the fog controller of each zone must impose an average response time limit $T$ for any class. We can thus express this average response time constraint for the customers of the $i$-th class as:
\begin{equation} \label{eq:6}
\dfrac {1}{\lambda_v^{(i)}-\lambda_c^{(i)}} \leq T
\end{equation}
which can also be re-written as:
\begin{equation} \label{eq:7}
\lambda_v^{(i)} - \lambda_c^{(i)}\geq \frac{1}{T}
\end{equation}
\indent Before reaching the customer service queues, the vehicles will go through a decision step of either going to these queues immediately or partially charging. The stability of the charging queues should be guaranteed in order to ensure the global stability of the entire system at the steady state. From the model described in the previous section, and the well-known stability conditions of M/M/C and M/M/1 queues, we get the following stability constraints on the $C$ charging points and one central charging station queues, respectively: 
\begin{equation}\label{eq:4}
\begin{aligned}
& & & \sum ^{n-1}_{i=0}\lambda_v(p_{i} - p_{i}{q_{i}}) < C (n \mu_c) \\
& & &\lambda_v p_{0}q_{0} < \mu_c
\end{aligned}
\end{equation}
\indent The following lemma sets a lower bound on the average vehicle in-flow rate to the entire service zone to guarantee both its stability and the average response time limit fulfillment for all its classes, given their demand rates.
\begin{lemma}\label{lem2}
For the entire zone stability, and fulfillment of the average response time limit for all its classes, the average vehicles in-flow rate must be lower bounded by:
 \begin{equation}\label{eq:9}
\begin{aligned}
\lambda_v \geq \sum ^{n}_{i=1}{\lambda_c^{(i)}} + \frac{n}{T}
\end{aligned}
\end{equation}
\end{lemma}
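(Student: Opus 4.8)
The plan is to aggregate the per-class response-time constraints and then exploit a flow-conservation identity that collapses the total service capacity of the $n$ customer queues back to $\lambda_v$. First I would observe that the response-time constraint \eqref{eq:7}, namely $\lambda_v^{(i)} - \lambda_c^{(i)} \geq 1/T$, is strictly stronger than the bare stability condition \eqref{eq:3}; hence enforcing \eqref{eq:7} for every class automatically secures the stability of all $n$ customer queues, and I need not treat \eqref{eq:3} separately. Summing \eqref{eq:7} over $i = 1, \ldots, n$ then immediately yields $\sum_{i=1}^{n} \lambda_v^{(i)} \geq \sum_{i=1}^{n} \lambda_c^{(i)} + n/T$, so the whole claim reduces to showing that the left-hand side is exactly $\lambda_v$.

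Next I would evaluate $\sum_{i=1}^{n}\lambda_v^{(i)}$ by substituting the expressions in \eqref{eq:2}. The decisive step is a telescoping cancellation: for the interior classes, the partial-charge contribution $+p_i q_i$ of queue $i$ is cancelled by the $-p_{i-1}q_{i-1}$ term appearing in queue $i+1$, leaving only the raw-arrival terms $p_{i-1}$. The $q$-dependent residual left by this interior telescoping is $p_{n-1}q_{n-1} - p_0 q_0$, and it is cancelled exactly by the matching terms $-p_{n-1}q_{n-1} + p_0 q_0$ carried by $\lambda_v^{(n)}$, which collects the fully-charged class-$0$ vehicles and the partially-charged class-$(n-1)$ vehicles. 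After all cancellations one is left with $\sum_{i=1}^{n}\lambda_v^{(i)} = \lambda_v \sum_{i=0}^{n-1} p_i$.

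Finally, since an active vehicle cannot arrive fully charged we have $p_n = 0$, and the arrival-class probabilities form a distribution, so $\sum_{i=0}^{n-1} p_i = 1$; hence $\sum_{i=1}^{n}\lambda_v^{(i)} = \lambda_v$. This is exactly the statement of flow conservation: every vehicle entering the zone is eventually dispatched to precisely one customer class (directly, or after partial/full charging), so the aggregate service rate across all customer queues must equal the aggregate vehicle in-flow. Combining this identity with the summed inequality from the first step gives $\lambda_v \geq \sum_{i=1}^{n}\lambda_c^{(i)} + n/T$, which is \eqref{eq:9}.

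The step I expect to be the main obstacle is the careful bookkeeping of the boundary terms in the telescoping sum. The interior classes telescope cleanly, but the routing is not a simple shift: class $0$ splits between class $1$ (partial charge) and class $n$ (full charge), while class $n-1$ also feeds class $n$, so one must track these wrap-around routings attentively to confirm that all $q$-dependent terms cancel in full and that exactly $\lambda_v \sum_{i=0}^{n-1} p_i$ survives. Any sign slip there would leave a spurious dependence on the $q_i$ and break the clean, decision-independent bound of \eqref{eq:9}.
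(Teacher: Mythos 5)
Your proof is correct and follows essentially the same route as the paper's: both sum the per-class response-time constraints \eqref{eq:7} over $i=1,\ldots,n$ and then collapse $\sum_{i=1}^{n}\lambda_v^{(i)}$ to $\lambda_v$ using $\overline{q}_i+q_i=1$ and $\sum_{i=0}^{n-1}p_i=1$; your ``telescoping'' of the $\pm p_i q_i$ terms is exactly the paper's pairing $p_i\overline{q}_i+p_iq_i=p_i$ after expanding $\overline{q}_i=1-q_i$. The bookkeeping of the wrap-around terms $p_0q_0$ and $p_{n-1}q_{n-1}$ that you flag as the main obstacle indeed works out exactly as you describe.
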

\begin{proof}
The proof of Lemma \ref{lem2} is in Appendix A in \cite{ref18}.
\end{proof}
Furthermore, the following lemma establishes a lower bound on the number of classes $n$ that fits zone's customer demands, average response time limit, and charging capabilities. 
\begin{lemma}\label{lem3}
For stablize the zone operation given its customer demands, average response time limit, and charging capabilities, the number of classes $n$ in the zone must obey the following inequality:
\begin{equation}\label{eq:11}
\begin{aligned}
n  \geq  \dfrac{\sum ^{n}_{i=1}{\lambda_c^{(i)}} - \mu_c}{C \mu_c - 1/T}
\end{aligned}
\end{equation}
\end{lemma}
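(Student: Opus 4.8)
The plan is to squeeze $n$ between the in-flow lower bound of Lemma~\ref{lem2} and the aggregate charging capacity implied by the stability constraints in~\eqref{eq:4}. First I would re-establish the conservation identity that underlies Lemma~\ref{lem2}: summing the per-class response-time constraints~\eqref{eq:7} over $i=1,\dots,n$ yields $\sum_{i=1}^{n}\lambda_v^{(i)}-\sum_{i=1}^{n}\lambda_c^{(i)}\ge n/T$, and substituting the service rates from~\eqref{eq:1} and~\eqref{eq:2} makes the right-hand side collapse. Indeed, each class-$i$ in-flow $\lambda_v p_i$ contributes its direct-serve part $\lambda_v p_i q_i$ and its partial-/full-charge part $\lambda_v p_i \overline q_i$ to exactly one customer queue, so using $q_i+\overline q_i=1$ and $\sum_{i=0}^{n-1}p_i=1$ the terms regroup into $\sum_{i=1}^{n}\lambda_v^{(i)}=\lambda_v$. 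This gives $\lambda_v\ge\sum_{i=1}^{n}\lambda_c^{(i)}+n/T$, i.e.\ the bound~\eqref{eq:9}.

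Next I would add the two charging-stability constraints in~\eqref{eq:4} --- the M/M/C partial-charging bound and the M/M/1 full-charging bound --- to obtain a single ceiling on the total charging throughput, $\lambda_v(\sum_{i=0}^{n-1}p_i\overline q_i+p_0 q_0)<Cn\mu_c+\mu_c$. The left-hand side is precisely the portion of the in-flow routed through a charging facility, and decomposing $\lambda_v=P+F+D$ into the partial-charging flow $P=\lambda_v\sum_{i=0}^{n-1}p_i\overline q_i$, the full-charging flow $F=\lambda_v p_0 q_0$, and the direct-serve flow $D=\lambda_v\sum_{i=1}^{n-1}p_i q_i$ shows this ceiling governs $P+F=\lambda_v-D$.

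The final step is to eliminate the auxiliary quantities and isolate $n$. In the dimensioning-critical regime, where charging resources are the bottleneck and essentially the whole in-flow must pass through the charging facilities (so $D\to 0$ and $P+F\to\lambda_v$, the configuration that stresses~\eqref{eq:4}), the ceiling becomes $\lambda_v<Cn\mu_c+\mu_c$. Chaining this with~\eqref{eq:9} gives $\sum_{i=1}^{n}\lambda_c^{(i)}+n/T\le\lambda_v<Cn\mu_c+\mu_c$, and collecting the $n$-terms --- valid because $C\mu_c>1/T$, so the coefficient $C\mu_c-1/T$ is positive --- rearranges to $n\ge(\sum_{i=1}^{n}\lambda_c^{(i)}-\mu_c)/(C\mu_c-1/T)$, which is~\eqref{eq:11}.

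I expect the main obstacle to be exactly this last reconciliation: the constraints in~\eqref{eq:4} only cap the flow $P+F$ that \emph{actually} charges, whereas Lemma~\ref{lem2} lower-bounds the \emph{total} in-flow $\lambda_v$, and the gap between them is the direct-serve flow $D$. Closing it requires arguing that the binding dimensioning scenario is the charging-intensive one with $D\approx 0$ --- which is natural here because Class $n$ can be reached \emph{only} through charging, and because the partial-charging rate $n\mu_c$ scales with $n$, so enlarging the effective charging capacity is possible only by increasing the number of classes. Making that worst-case reduction precise, rather than the elementary algebra that follows it, is the crux of the argument.
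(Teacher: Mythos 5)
Your proposal is correct and follows essentially the same route as the paper's proof: sum the two charging-stability constraints in~\eqref{eq:4}, invoke the worst case in which every vehicle is routed through charging (your $D\to 0$ is exactly the paper's ``$q_i=0$ for all $i$''), and chain the resulting ceiling $\lambda_v < Cn\mu_c+\mu_c$ with the lower bound of Lemma~\ref{lem2} before rearranging in $n$. The only difference is presentational --- you flag the worst-case reduction as the step needing justification, whereas the paper simply asserts it as the dimensioning-critical scenario --- so no substantive gap separates the two arguments.
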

\begin{proof}
The proof of Lemma \ref{lem3} is in Appendix B in \cite{ref18}.
\end{proof}

\section{Optimal Vehicle Dimensioning}
\subsection{Problem Formulation}
As previously mentioned, this paper aims to minimize the average vehicle in-flow rate $\lambda_v$ to the entire zone, given its charging capacity and customer demand rates, while guaranteeing an average response time limit for each class customers. Given the described system dynamics in Section \ref{sec:model} and the derived conditions in Section \ref{sec:stability}, the above problem can be formulated as a stochastic optimization problem as follows:
\begin{subequations}\label{eq:12}
\begin{align}
&\qquad\qquad \underset{q_0,q_1,\ldots ,q_{n-1}}{\text{minimize }} \lambda_v \\
\text{s.t}  & \nonumber\\
&\lambda_c^{(i)} - \lambda_v(p_{i-1} - p_{i-1}{q_{i-1}} + p_{i}q_{i}) +\frac{1}{T} \leq 0, \; i = 1, \ldots, n-1 \label{eq:12-C1}\\
& \lambda^{(n)}_c - \lambda_v(p_{n-1} - p_{n-1}{q_{n-1}} + p_{0}q_{0}) +\frac{1}{T}\leq 0 \ignore{,\; i = n} \label{eq:12-C2}\\
&\sum ^{n-1}_{i=0}\lambda_v(p_{i} - p_{i}{q_{i}}) - C (n \mu_c) < 0\label{eq:12-C3}\\
&\lambda_v p_{0}q_{0} - \mu_c<  0 \label{eq:12-C4} \\
& \sum ^{n-1}_{i=0}p_i = 1 , \; 0 \leq  p_{i} \leq  1  , \; i = 0, \ldots, n-1  \label{eq:12-C5}\\
&0 \leq q_{i} \leq 1  , \; i = 0, \ldots, n-1  \label{eq:12-C6}\\
&\lambda_v \geq \sum ^{n}_{i=1}{\lambda_c^{(i)}} + \frac{n}{T}  \label{eq:12-C7}
\end{align}
\end{subequations}
The $n$ constraints in (\ref{eq:12-C1}) and (\ref{eq:12-C2}) represent the stability and response time limit conditions of the system introduced in (\ref{eq:7}), after substituting every $\lambda_v^{(i)}$ by its expansion form in (\ref{eq:2}). The constraints in (\ref{eq:12-C3}) and (\ref{eq:12-C4}) represent the stability conditions for the charging queues. The constraints in (\ref{eq:12-C5}) and (\ref{eq:12-C6}) are the axiomatic constraints on probabilities (i.e., values being between 0 and 1, and sum equal to 1). Finally, Constraint (\ref{eq:12-C7}) is the lower bound on $\lambda_v$ introduced by Lemma (\ref{lem2}).\\
The above optimization problem is a quadratic non-convex problem with second order differentiable objective and constraint functions. Usually, the solution obtained by using the Lagrangian and KKT analysis for such non-convex problems provides a lower bound on the actual optimal solution. Consequently, we propose to solve the above problem by first finding the solution derived through Lagrangian and KKT analysis, then, if needed, iteratively tightening this solution to the feasibility set of the original problem.

\subsection{Lower Bound Solution}
\ignore{As stated above, applying the KKT conditions to the constraints of the problem and the gradient of the Lagrangian function allows us to find an analytical solutions on the decisions variables $q_i$ achieving a lower bound on the optimal vehicle in-flow rate $\lambda_v$. This section aims to find this lower bound solution.}The Lagrangian function associated with the optimization problem in (\ref{eq:12}) is given by the following expression:
\begin{multline}\label{eq:13}
\begin{aligned}
& L(\mathbf{q},\lambda_v,\boldsymbol{\alpha},\boldsymbol{\beta},\boldsymbol{\gamma},\boldsymbol{\omega}) =
 \lambda_v +\alpha_{n} ( \lambda_v( p_{n-1}q_{n-1} - p_{0}{q_{0}} \\ 
&-p_{n-1} )+ \lambda_c^{(n)} + \frac{1}{T} ) +\sum_{i=1}^{n-1}\alpha_{i}( \lambda_v(p_{i-1}{q_{i-1}} -  p_{i}q_{i} -p_{i-1})
 \\
& + \lambda_c^{(i)} +\frac{1}{T} ) + \beta_{0} (\sum ^{n-1}_{i=0}\lambda_v(p_{i} - p_{i}{q_{i}}) - C n \mu_c + \epsilon_0 ) \\ 
& + \beta_{1}(\lambda_v p_{0}q_{0} - \mu_c + \epsilon_1 )+ \sum_{i=0}^{n-1} \gamma_{i}(q_{i} - 1) - \sum_{i=0}^{n-1}\omega_{i}q_{i}\\
&- \omega_{n}(\lambda_v -  \sum ^{n}_{i=1}{\lambda_c^{(i)}} - \frac{n}{T})\;,
\end{aligned}
\end{multline}
where $\mathbf{q}$ is the vector of dispatching decisions (i.e. $\mathbf{q} = [q_0,\dots,q_{n-1}]$), and:
\begin{itemize}
\item $\boldsymbol{\alpha} = [\alpha_{i}]$, such that $\alpha_{i}$ is the associated Lagrange multiplier to the $i$-th customer queue inequality.
\item $\boldsymbol{\beta} = [\beta_{i}]$, such that $\beta{i}$ is the associated Lagrange multiplier to the $i$-th charging queue inequality.
\item $\boldsymbol{\gamma} = [\gamma_{i}]$, such that $\gamma_{i}$ is the associated Lagrange multiplier to the $i$-th upper bound inequality on $q_i$.
\item $\boldsymbol{\omega} = [\omega_{i}]$, such that $\omega_{i}$ is the  associated Lagrange multiplier to the $i$-th lower bound inequality on $q_i$ and $\lambda_v$.
\end{itemize}
For more accurate resolutions, two small positive constants $\epsilon_0$ and $\epsilon_1$ are added to the stability conditions on the charging queues to make them non strict inequalities.\\
\indent Solving the equations given by the KKT conditions on the problem equality and inequality constraints, the following theorem illustrates the optimal lower bound solutions of the problem in (\ref{eq:12}) 
\begin{theorem}\label{thm1}
The lower bound solution of the optimization problem in (\ref{eq:12}), obtained from Lagrangian and KKT analysis can be expressed as follows:
\begin{equation}\label{eq:15}
\begin{aligned}
& \lambda_v^* = \begin{cases}
           \sum ^{n}_{i=1}{\lambda_c^{(i)}} + \frac{n}{T} ~~~  \omega_n^* \ne 0 \\    
            \sum_{i=1}^{n}\alpha_{i}^*(\lambda_c^{(i)}+\frac{1}{T}) -\beta_0^* ( C n \mu_c - \epsilon_0) - \beta_1^* (\mu_c -\epsilon_1) \\ ~~~~~~~~~~~~~~~~~~~~~~~~~~~~~~~~~~~~~~~~~~~~~~~~~~~~~~ \omega_n^* = 0\\ 
            \end{cases} \\ 
& q_{0}^* = \begin{cases}
            0 &  \alpha_{1}^* - \alpha_{n}^* - \beta_0^* + \beta_1^*>0 \\
            1 & \alpha_{1}^* - \alpha_{n}^* - \beta_0^* + \beta_1^* < 0 \\
            \frac{p_{n-1}{q_{n-1}^* - p_{n-1} }}{p_{0}} + \frac{ \lambda_c^{(n)} + \frac{1}{T}}{\lambda_v p_{0} } & \alpha_{n}^* \ne 0 \\
            \frac{\mu_c}{\lambda_v^* p_0}& \beta_1^* \ne 0 \\
            \zeta_0(\alpha^*,\beta^*,\gamma^*,\lambda_v^*,q^*)& Otherwise \\
            \end{cases} \\ 
& q_{i}^* =   \begin{cases}
            0 &  \alpha_{i+1}^* - \alpha_{i}^* - \beta_0^* > 0 \\
            1 &  \alpha_{i+1}^* - \alpha_{i}^* - \beta_0^* < 0  \\
            \frac{p_{i-1}{q_{i-1}^*} -p_{i-1}}{p_{i}} + \frac{\lambda_c^{(i)} + \frac{1}{T}}{\lambda_v p_{i} } &\alpha_{i}^* \ne 0\\
             \zeta_i(\alpha^*,\beta^*,\gamma^*,\lambda_v^*,q^*)& Otherwise \\
            \end{cases}  \\      
& i= 1, \ldots, n-1.
\end{aligned}
\end{equation}
where $\zeta_i(\alpha^*,\beta^*,\gamma^*,\lambda_v^*,q^*)$ is the solution that that maximize $\underset{\mathbf{q}}\inf~ L(\mathbf{q},\alpha^*,\beta^*,\gamma^*,\lambda_v^*)$
\end{theorem}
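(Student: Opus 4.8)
The plan is to show that the stated formulas are exactly the points satisfying the Karush--Kuhn--Tucker (KKT) system of problem (\ref{eq:12}); because the problem is quadratic and non-convex, these conditions are necessary (under a constraint qualification) and the resulting stationary point furnishes the claimed lower bound rather than a certified global optimum, consistent with the remark preceding the theorem. First I would write down the complete KKT system attached to the Lagrangian (\ref{eq:13}): stationarity $\nabla_{\mathbf{q},\lambda_v} L = 0$; primal feasibility (the constraints of (\ref{eq:12})); dual feasibility $\alpha_i,\beta_0,\beta_1,\gamma_i,\omega_i,\omega_n \geq 0$; and complementary slackness, namely $\alpha_i\cdot(\text{constraint }i)=0$, $\beta_0\cdot(\ldots)=0$, $\beta_1\cdot(\ldots)=0$, $\gamma_i(q_i-1)=0$, $\omega_i q_i=0$, and $\omega_n\big(\lambda_v-\sum_i\lambda_c^{(i)}-n/T\big)=0$.

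For $\lambda_v^*$ I would split on $\omega_n$. If $\omega_n^*\neq 0$, complementary slackness forces constraint (\ref{eq:12-C7}) to be tight, which immediately yields the first branch $\lambda_v^*=\sum_i\lambda_c^{(i)}+n/T$. If $\omega_n^*=0$, I would take the stationarity equation $\partial L/\partial\lambda_v=0$, multiply through by $\lambda_v$, and substitute the tight-constraint identities supplied by complementary slackness: $\lambda_v(p_{i-1}-p_{i-1}q_{i-1}+p_iq_i)=\lambda_c^{(i)}+1/T$ whenever $\alpha_i\neq0$, $\lambda_v\sum_i(p_i-p_iq_i)=Cn\mu_c-\epsilon_0$ whenever $\beta_0\neq0$, and $\lambda_v p_0q_0=\mu_c-\epsilon_1$ whenever $\beta_1\neq0$ (terms whose multiplier vanishes drop out). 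This collapses the expression to the stated second branch.

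For the dispatch probabilities I would compute $\partial L/\partial q_i$ for $i=1,\dots,n-1$ and $\partial L/\partial q_0$ separately, since $q_0$ additionally couples into the central-charge ($\beta_1$) and full-charge ($\alpha_n$) terms while the others do not. Each derivative reduces to the form $\lambda_v p_i(\alpha_{i+1}-\alpha_i-\beta_0)+\gamma_i-\omega_i=0$ (with coefficient $\alpha_1-\alpha_n-\beta_0+\beta_1$ for $q_0$). Comparing the sign of the bracket against $\gamma_i-\omega_i$ and invoking dual feasibility then pins the box multipliers: a strictly positive bracket forces $\omega_i>0$, hence $q_i=0$ by slackness; a strictly negative bracket forces $\gamma_i>0$, hence $q_i=1$. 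When instead the pertinent queue multiplier $\alpha_i$ (respectively $\alpha_n$ or $\beta_1$ for $q_0$) is active, I solve its now-tight constraint for $q_i$, recovering the interior rational expressions involving $q_{i-1}^*$. The remaining case, where the bracket vanishes and no single multiplier is distinguished, leaves $q_i$ underdetermined by the first-order conditions; there I define $\zeta_i$ as the value maximizing $\inf_{\mathbf{q}} L$, consistent with the dual.

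The main obstacle is the degenerate ``otherwise'' branch and the consistency of the case split: one must verify that the sign conditions, the tightness (complementary-slackness) conditions, and dual feasibility are mutually compatible, so that the enumeration is well-posed and exhaustive, and that the recursive interior formulas $q_i^*$ (each written through $q_{i-1}^*$) admit a coherent forward resolution from a boundary value. I would also explicitly flag, as the paper does, that non-convexity permits a duality gap, so this KKT point certifies only a lower bound on $\lambda_v$, which motivates the subsequent feasibility-tightening step.
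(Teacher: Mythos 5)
Your proposal is correct and follows essentially the same route as the paper's proof: the same Lagrangian/KKT system, the same complementary-slackness splits ($\omega_n^*$ for $\lambda_v^*$; the sign of $\alpha_{i+1}^*-\alpha_i^*-\beta_0^*$, respectively $\alpha_1^*-\alpha_n^*-\beta_0^*+\beta_1^*$, for the boundary values of $q_i^*$; active $\alpha_i^*$, $\alpha_n^*$, $\beta_1^*$ for the interior expressions; $\zeta_i$ otherwise), and the same multiply-by-$\lambda_v$ substitution to obtain the second branch of $\lambda_v^*$. The only cosmetic difference is that you pin $q_i^*\in\{0,1\}$ directly from stationarity plus dual feasibility ($\omega_i^*>0$ or $\gamma_i^*>0$), whereas the paper first multiplies the stationarity equations by $q_i^*$ before reaching the same conclusion.
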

\begin{proof}
The proof of Theorem \ref{thm1} is in Appendix C in \cite{ref18}.
\end{proof}
\subsection{Solution Tightening}
As stated earlier, the closed-form solution derived in the previous section from analyzing the constraints' KKT conditions does not always match with the optimal solution of the original optimization problem, and is sometimes a non-feasible lower bound on our problem. Unfortunately, there is no method to find the exact closed-from solution of non-convex optimization. However, starting from the derived lower bound, we can numerically tighten this solution by toward the feasibility set of the original problem. There are several algorithms to iteratively tighten lower bound solutions, one of which is the \textit{Suggest-and-Improve algorithm} algorithm proposed in \cite{qcqp} to tighten non-convex quadratic problems.\ignore{\\
\begin{algorithm}
\caption{Suggest-and-Improve algorithm}
\begin{algorithmic}
\State{1. Suggest. Find a candidate point $q \in D^n$ and $\lambda_v \in R$.\\
2. Improve. Run a local method from q to find a point $z \in D^n$ and $x \in R$ that is no worse than q and $\lambda_v$.\\
3. return z and x.}
\end{algorithmic}
\end{algorithm}\\
with q is the vector of the decisions $q_i$ and D=[0,1] is the domain in which each $q_i$ is defined.} We will thus propose to employ this method whenever the KKT conditions based solution is not feasible and tightening is required.

\section{Simulation Results}
In this section, we test both the performance and merits of the proposed dimensioning solution for the considered multi-class AEMoD system. The metric of interest in this study is the optimal vehicle in-flow rate to an arbitrary zone of interest. The performance of the proposed dimensioning solution is tested for two possible SoC distributions for in-flow vehicles, namely the decreasing and Gaussian distributions. The former distribution better models the more probable active-vehicle-dominant in-flow scenarios, as such vehicles typically exhibit higher chances of having lower battery charge. The latter distribution models the rarer relocated-vehicle-dominant in-flow scenarios, as such vehicles typically charge for random amounts of times before relocating to the zone of interest. Customers trip distances are always assumed to follow a Gaussian distribution because customers requiring mid-size distances are usually more than those requiring very small and very long distances. For all the performed simulation studies, the full-charging rate of a vehicle is set to $\mu_c = 0.033$ mins$^{-1}$. Moreover, for Figures 2, 3, and 4, the number of charging poles $C$ is set to 40.\\
\indent The first important finding of this study is that the obtained solutions using the closed-form expressions in Theorem 1 (i.e., the one derived by applying the KKT conditions) were always feasible solutions to the original problem in (\ref{eq:12}), for the entire broad range of system parameters employed in our simulations. Thus, the derived closed-form solution is in fact the optimal dimensioning solution for a broad range of system settings, and no tightening is needed.\\
\indent Fig. \ref{fig:tradeoff} shows the trade-off relation between the average response time limit, total customer demand rate, and the optimal vehicle in-flow rate, for both vehicle SoC distributions. This curve can be used by the fog controller to get a rough estimate (without exact demand information per class nor optimization of the dispatching and charging dynamics) on its required in-flow rate (and thus whether it needs extra vehicles or have excess vehicles to relocate) for any given customer demand rate and desired response time limit.\ignore{ It can be also employed to calculate and notify customers about their expected response time limit for scenarios where the vehicle in-flow rate cannot be changed.}\\
\begin{figure}[t]
\centering
  \includegraphics[width=.98\linewidth]{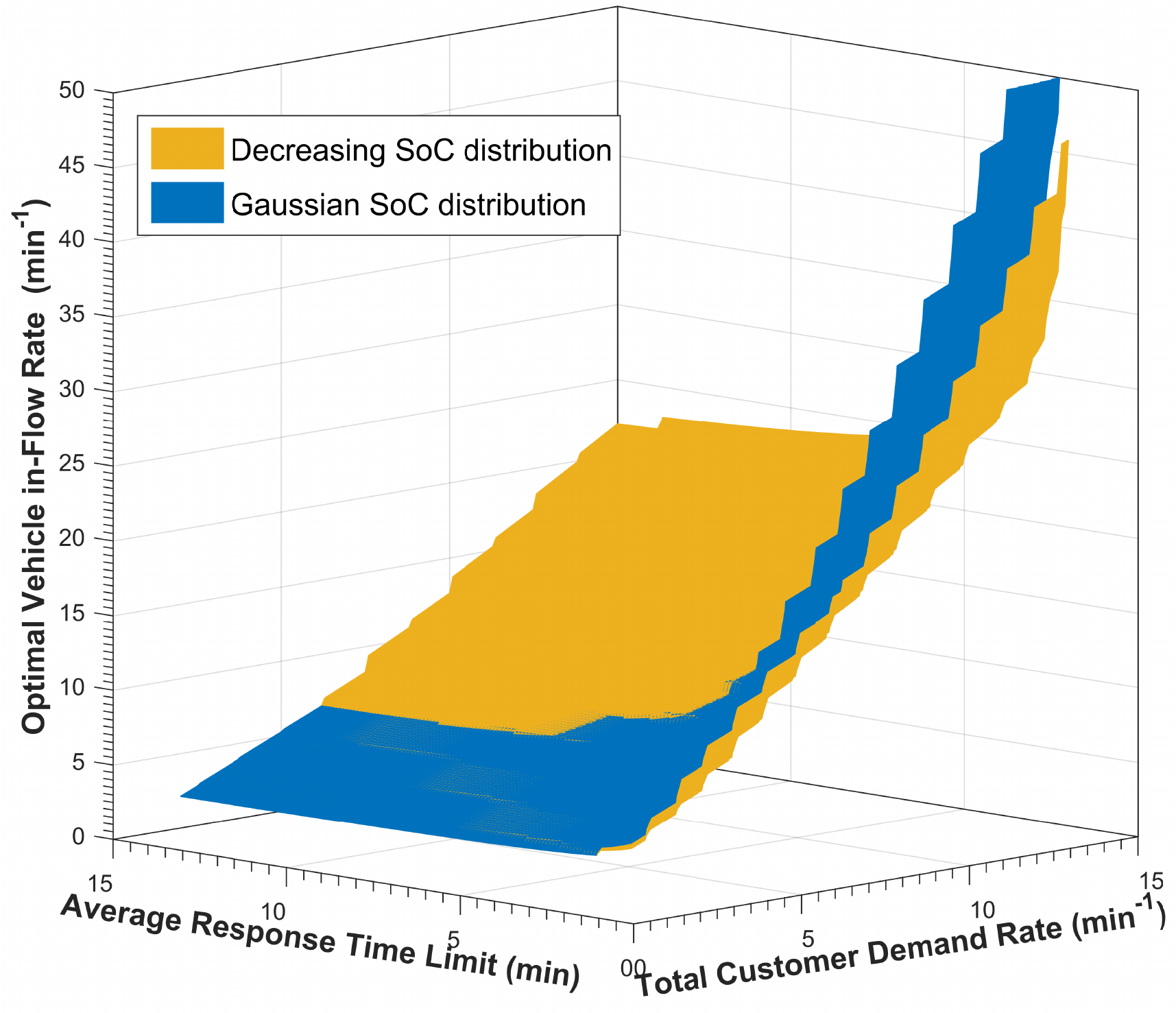}
    \caption{Effect of varying the average response time limit and total customer demand rate.}
 \label{fig:tradeoff}
\end{figure}
\indent Fig. \ref{fig:merged} illustrates the effect of increasing the number of classes $n$ beyond its lower bound introduced in Lemma \ref{lem3} for both variable total customer demand rate (while fixing the average response time limits to 5 mins) and variable average response time limits (while fixing the total customer demand rate to 5 min$^{-1}$) in the left and right sub figures, respectively. Both decreasing and Gaussian SoC distributions are considered. In both sub-figure, the lower bound on the number of classes vary depending on the values of the average response time and the total customer demand rate (as shown in Lemma 2), with maximum values of 14 and 11 for the employed values in the left and right sub-figures, respectively. The results in both figures clearly show that increasing $n$ beyond its lower bound increases the required vehicle in-flow to the zone. We thus conclude that the optimal number of classes is the smallest integer value satisfying Lemma \ref{lem3}.\\ \ignore{
\begin {small}
\begin{equation}
n^* = \begin{cases}  \dfrac{\sum ^{n}_{i=1}{\lambda_c^{(i)}} - \mu_c}{C \mu_c - 1/T} &  \dfrac{\sum ^{n}_{i=1}{\lambda_c^{(i)}} - \mu_c}{C \mu_c - 1/T} ~\text{integer}\\
\left\lceil \dfrac{\sum ^{n}_{i=1}{\lambda_c^{(i)}} - \mu_c}{C \mu_c - 1/T}\right\rceil & \text{Otherwise}
\end{cases}
\end{equation}
\end{small} }
\begin{figure}[t]
\centering
  \includegraphics[width=.98\linewidth]{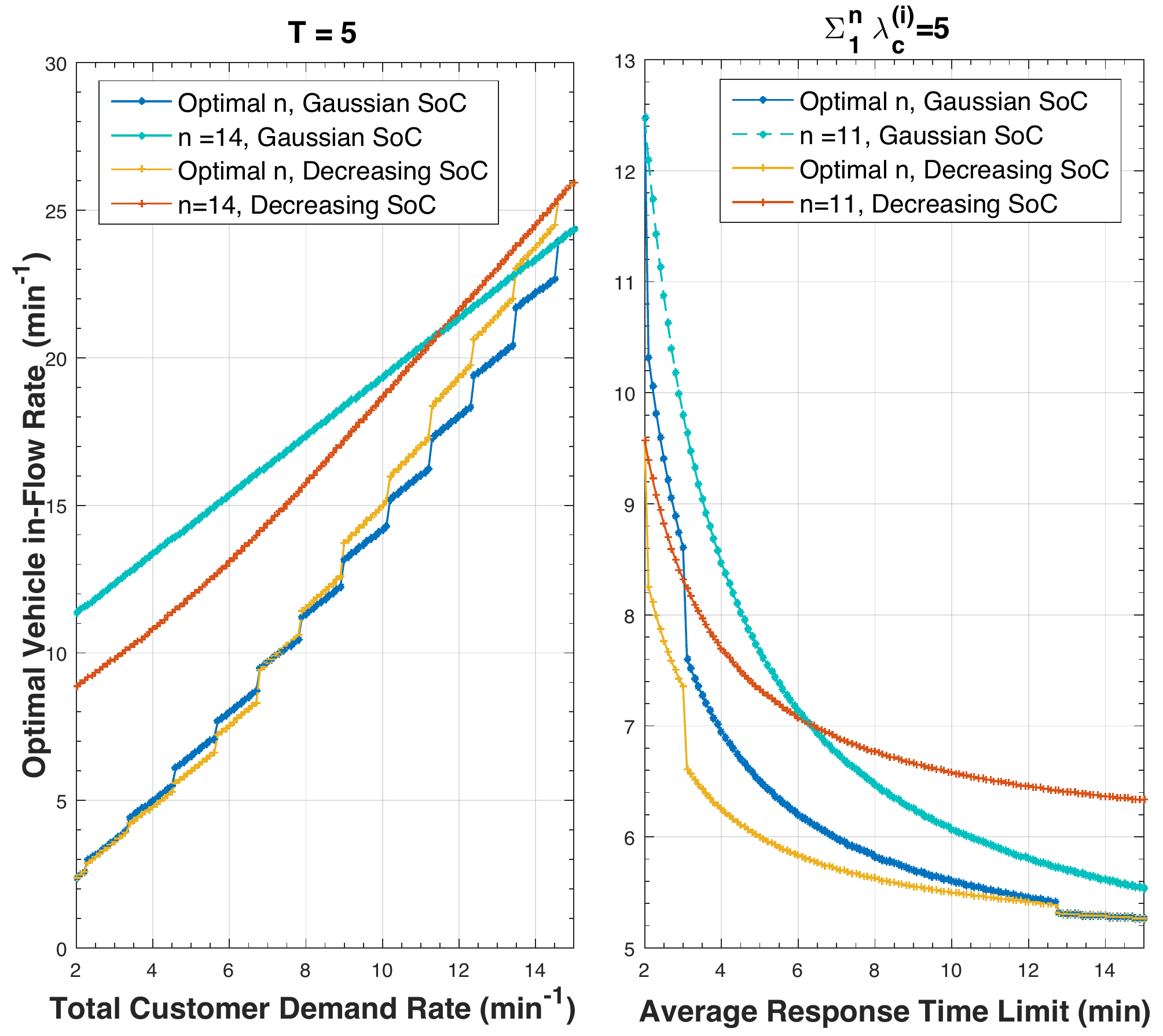}
    \caption{Effect of increasing the number of classes.}
 \label{fig:merged}
\end{figure}
\indent Fig. \ref{fig:mergedspecialcases} compares the performance of our proposed optimal vehicle dimensioning scheme with other non-optimized approaches (in which vehicles follow a fixed dispatching/charging policy irrespective of the system parameters) for different values of total customer demand rate (with $T=5$) and average response time limit (with $\sum_{i=1}^n\lambda_c^{(i)}=5$). The two non-optimized approaches are the always-charge approach (i.e. $q_i=0~\forall~i$) and the equal-split approach (i.e. $q_i=0.5~\forall~i$). The figure clearly shows the superior performance of our derived optimal policy compared to the two non-optimized policies, especially for large total customer demand rates and lower average response time limits. For $\sum ^{n}_{i=1}{\lambda_c^{(i)}} = 10$ min$^{-1}$ in the left subfigure, 36\% and 44.4\% less vehicle in-flow rates are required compared to always-charge and equal-split policies, respectively, for the more typical decreasing SoC distribution. These reductions reach 57.6\% and 42.5\%, respectively for $T = 10$ min in the right subfigure. The always-charge policy is exhibiting less increase in the required vehicle in-flow rate when the SoC follows a Gaussian distribution. However, some considerable gains can still be achieved using our proposed optimized approach in this less frequent SoC distribution setting. Noting that these gains can be higher in more critical scenarios, the results demonstrate the importance of our proposed scheme in establishing a better engineered and more stable system with less vehicles.\\
\begin{figure}[t]
\centering
  \includegraphics[width=.98\linewidth]{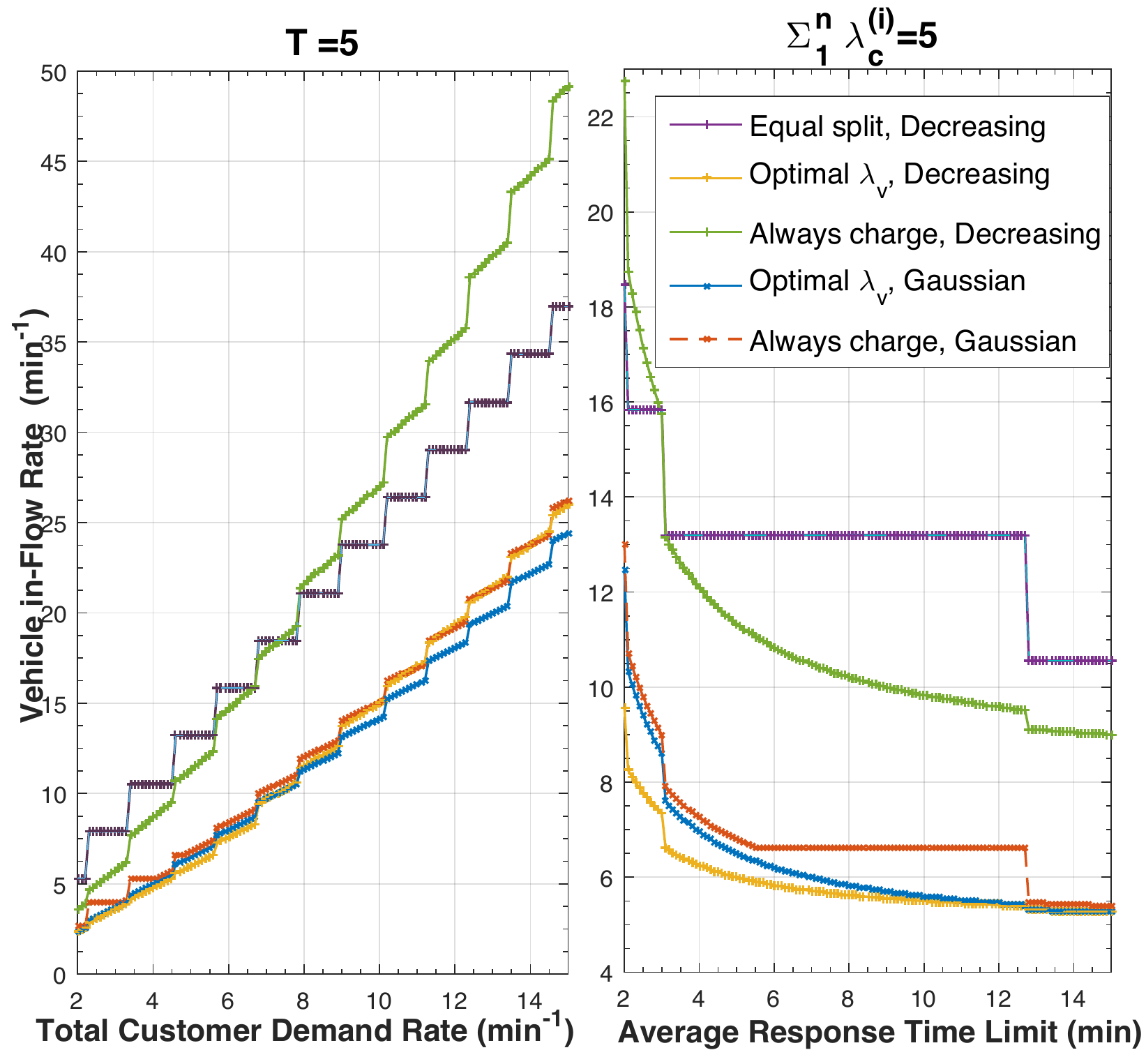}
    \caption{Comparison to non-optimized policies.}
 \label{fig:mergedspecialcases}
\end{figure}
\indent Finally, we studied the resilience requirements for our considered model in the critical scenarios of sudden reduction in the number of charging sources within the zone. This reduction may occur due to either natural (e.g., typical failures of one or more stations) or intentional (e.g., a malicious attack on the fog controller blocking its access to these sources).\ignore{ It is clear that the stable operation of AEMoD systems strongly depends of the power energy resources represented by the number of available charging poles $C$ in each zone. In order for the system to be resilient, it must be able to quickly adapt to and/or recover from sudden failures of one or more charging stations.} The resilience measures that the fog controller can take in these scenarios is to both notify its customers of a transient increase in the vehicles' response times given the available vehicles in the zone, and request a higher vehicle in-flow rate to gradually restore its original response time limit.\\ 
\indent Our developed optimization framework in \cite{ref13} and this paper can easily provide proper numbers for both the above two needed actions by the fog controller in charging station outage events. The problem of computing the maximum transient response time of the system given the fixed vehicle in-flow rate at failure time was already solved in our previous related work \cite{ref13}. The left subfigure of Fig. \ref{fig:mergedC} depicts the maximum response time values of the system for different numbers of available charging poles for a vehicle in-flow rate $\lambda_v = 8$ min$^{-1}$ and a total customer demand rate of $5$ min$^{-1}$. For a Gaussian distribution of vehicles' SoC, the response time increases dramatically when the number of charging poles drops below 20. On the other hand, the degradation in response time was much less severe when the SoC of vehicles follows the decreasing distribution. Luckily, the decreasing SoC distribution is the one that is more probable especially at the time just preceding the failure (where most vehicles arriving to the zone are active vehicles).\\
\indent As for the recovery from this critical scenario and restoration of the original response time limit, the proposed dimensioning framework in this paper can be employed to determine the new optimal value of vehicle in-flow rate. The right sub-figure in Fig. \ref{fig:mergedC} depicts the optimal vehicles in-flow $\lambda_v^*$ for different values of available charging poles $C$. In this simulation, the total customer demand rate is set to $\sum ^{n}_{i=1}{\lambda_c^{(i)}} = 8$ min$^{-1}$ and the average response time limit is restored back to $T=10$ mins. The figure shows that the Gaussian SoC distribution case, which would be luckily the dominant case in this zone after failure time (due to the domination of relocated vehicles called in by the fog controller to recover from the failure event), exhibit lower need of vehicle in-flow rate to restore the system conventional operation.    

\section{Conclusion}
This paper aimed to formally characterize the optimal vehicle dimensioning for fog-based multi-class AEMoD systems given a system-wide average response time limit. Using the system's queuing model and its stability/response-time constraints, we formulated the optimal vehicle dimensioning problem as a non-convex quadratic program over the multi-class dispatching and charging proportions. The lower bound solution corresponding to the Lagrangian and KKT-conditions analysis of the problem were analytically derived, and were shown to match the optimal solution of the original problem for a broad range of system parameters using extensive simulations. The optimal number of classes to minimize the required vehicle in-flow rate was also characterized. Simulation results demonstrated the merits of our proposed optimal decision scheme compared to other schemes. They also illustrated the resilience requirements calculated using our proposed solutions to recover from sudden reductions in charging resources.

\begin{figure}[t]
\centering
  \includegraphics[width=.98\linewidth]{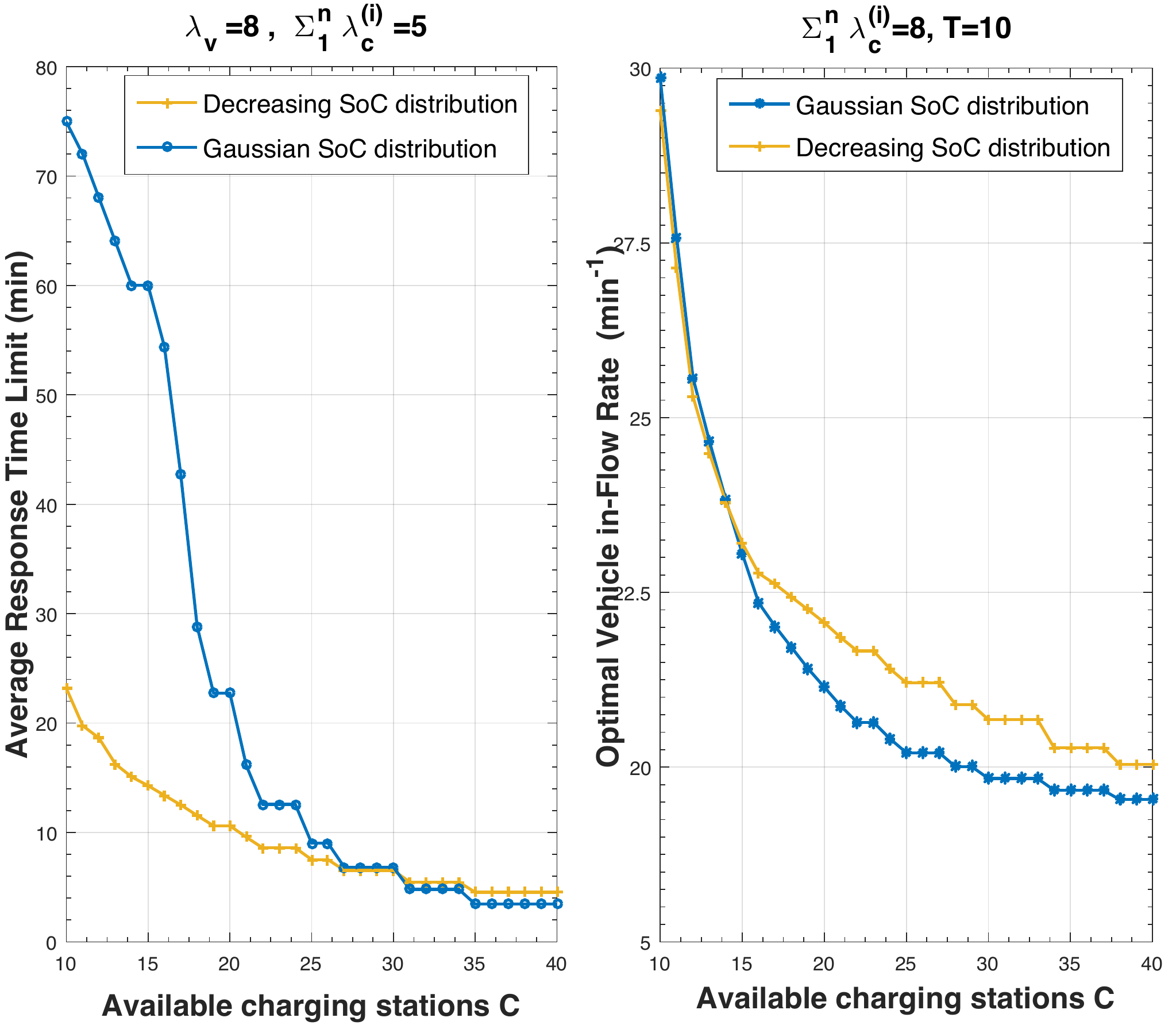}
    \caption{Effect of varying the charging point availability.}
 \label{fig:mergedC}
\end{figure}

\clearpage
\appendices
\section{Proof of Lemma \ref{lem2}}\label{app:lem2}
From (\ref{eq:2}) and (\ref{eq:7}) we have 
\begin{equation}\label{eq:22}
\begin{aligned}
& & & \lambda_c^{(i)} + \frac{1}{T} \leq \lambda_v(p_{i-1}\overline{q}_{i-1} + p_{i}q_{i}) , \; i = 1, \ldots, n-1.\\
& & & \lambda_c^{(n)} + \frac{1}{T} \leq\lambda_v(p_{n-1}\overline{q}_{n-1} + p_{0}q_{0}),\; i = n
\end{aligned}
\end{equation}

The summation of all the inequalities in (\ref{eq:22}) gives a new inequality 
\begin{equation}\label{eq:23}
\begin{aligned}
\sum ^{n}_{i=1}{\lambda_c^{(i)}} + \frac{n}{T} \leq \lambda_v [\sum ^{n-1}_{i=1}{(p_{i-1}\overline{q}_{i-1} + p_{i}q_{i})} + (p_{n-1}\overline{q}_{n-1} + p_{0}q_{0})]
\end{aligned}
\end{equation}
\begin{equation}\label{eq:24}
\begin{aligned}
\sum ^{n}_{i=1}{\lambda_c^{(i)}} + \frac{n}{T} \leq \lambda_v [{p_{0}\overline{q}_{0} + p_{1}q_{1}} + {p_{1}\overline{q}_{1}}+ ... + p_{n-1}\overline{q}_{n-1} + p_{0}q_{0}]
\end{aligned}
\end{equation}
We have ${\overline{q}_{i} + q_{i}}$ so ${p_{i}\overline{q}_{i} + p_{i}q_{i}} = p_{i}$
\begin{equation}\label{eq:25}
\begin{aligned}
\sum ^{n}_{i=1}{\lambda_c^{(i)}} + \frac{n}{T} \leq \lambda_v ({p_{0} + p_{1} + p_{2}}+ ... + p_{n-1})
\end{aligned}
\end{equation}
We have 
$\sum ^{n-1}_{i=0}{p_{i}} = 1 $ so $\sum ^{n}_{i=1}{\lambda_c^{(i)}} +\frac{n}{T} \leq \lambda_v $

\section{Proof of Lemma \ref{lem3}}\label{app:lem3}
The summation of the inequalities given by (\ref{eq:4}) $\forall~i=\{0,\dots, n\}$ gives the following inequality : 
\begin{equation}
\begin{aligned}
\lambda_v\sum ^{n-1}_{i=0}p_{i} -\lambda_v\sum ^{n-1}_{i=0}p_{i}{q_{i}} +\lambda_v p_{0}q_{0} < C(n \mu_c) + \mu_c
\end{aligned}
\end{equation}
Since $\sum ^{n-1}_{i=0}p_{i}=1$ (because $p_n=0$ as described in Section 2), we get: 
\begin{equation}\label{eq:27}
\begin{aligned}
\lambda_v -\lambda_v\sum ^{n-1}_{i=1}p_{i}{q_{i}}\ignore{ +\lambda_v p_{0}q_{0}} < \mu_c (C n + 1)
\end{aligned}
\end{equation}
In the worst case, all the vehicles will be directed to partially charge before serving, which means that always $q_i = 0$. Therefore, we get:
\begin{equation}\label{eq:28}
\begin{aligned}
C n > \dfrac{\lambda_v}{\mu_c} - 1\;,
\end{aligned}
\end{equation}
which can be re-arranged to be: 
\begin{equation}\label{eq:29}
\begin{aligned}
n > \dfrac{\lambda_v}{C \mu_c} - \dfrac{1}{C}
\end{aligned}
\end{equation}
From equation (\ref{eq:29}) and equation (\ref{eq:9}) we have 
\begin{equation}\label{eq:301}
\begin{aligned}
n > \dfrac{\lambda_v}{C \mu_c} - \dfrac{1}{C} \geq \dfrac{\sum ^{n}_{i=1}{\lambda_c^{(i)}} + \frac{n}{T}}{C \mu_c} - \dfrac{1}{C}
\end{aligned}
\end{equation}
By simplifying equation (\ref{eq:301}) we get
\begin{equation}
\begin{aligned}
n \geq T \dfrac{\sum ^{n}_{i=1}{\lambda_c^{(i)}} - \mu_c}{T C \mu_c - 1}
\end{aligned}
\end{equation}

\section{Proof of Theorem \ref{thm1}} \label{app:thm1}
Applying the KKT conditions to the inequalities constraints of (\ref{eq:11}), we get:
\begin{equation}\label{eq:291}
\begin{aligned}
& & &\alpha_{i}^*( {\lambda_v^*(p_{i-1}{q_{i-1}^*} - p_{i}q_{i}^* - p_{i-1}) + \frac{1}{T}} + \lambda_c^{(i)} ) = 0\\ 
& & & i= 1, \ldots, n-1.\\
& & &\alpha_{n}^* ( {\lambda_v^*( p_{n-1}q_{n-1}^* - p_{0}{q_{0}^*} - p_{n-1}) + \frac{1}{T}} + \lambda_c^{(n)} ) = 0.\\
& & &\beta_{0}^* (\sum ^{n-1}_{i=0}\lambda_v(p_{i} - p_{i}{q_{i}^*}) - C (n \mu_c) + \epsilon_0) = 0.\\
& & &\beta_{1}^*(\lambda_v p_{0}q_{0}^* - \mu_c +\epsilon_1) = 0 \\
& & & \gamma_{i}^*(q_{i}^* - 1) = 0 , \;i = 0, \ldots, n-1.\\
& & &\omega_{i}^*q_{i}^* = 0 , \;i = 0, \ldots, n-1.\\
& & & \omega_{n}^*(\lambda_v^* -( \sum ^{n}_{i=1}{\lambda_c^{(i)}} + \frac{n}{T}) ) = 0.
\end{aligned}
\end{equation}
Likewise, applying the KKT conditions to the Lagrangian function in (\ref{eq:12}), and knowing that the gradient of the Lagrangian function goes to $0$ at the optimal solution, we get the following set of equalities:
\begin{equation}\label{eq:30}
\begin{aligned}
& \lambda_v^* p_{i}(\alpha_{i+1}^* - \alpha_{i}^* - \beta_0^*) = \omega_{i}^* - \gamma_{i}^* , \; i= 1, \ldots, n-1.\\
& \lambda_v^* p_{0}(\alpha_{1}^* - \alpha_{n}^* - \beta_0^* + \beta_1^*) = \omega_{0}^* - \gamma_{0}^*\\
&\sum_{i=1}^{n-1} \alpha_{i}^*(p_{i-1}{q_{i-1}^*} - p_{i}q_{i}^* - p_{i-1}) + \alpha_{n}^* ( p_{n-1}q_{n-1}^* \\ &- p_{0}{q_{0}^*} - p_{n-1}) +  \beta_{0}^* (\sum ^{n-1}_{i=0}(p_{i} - p_{i}{q_{i}^*}))+\beta_{1}^* p_{0}q_{0}^* - \omega_{n}^* + 1= 0
\end{aligned}
\end{equation} 
Knowing that the gradient of the Lagrangian goes to $0$ at the optimal solutions, we get the system of equalities given by (\ref{eq:30}). multiplying the first equality in (\ref{eq:30}) by $q_i^*$ and the second equality by $q_0^*$ and the third equality by $\lambda_v^*$combined with the equalities given by (\ref{eq:291}) gives :
\begin{equation}\label{eq:31}
\begin{aligned}
& \lambda_v^* p_{i}q_i^*(\alpha_{i+1}^* - \alpha_{i}^* - \beta_0^*) = - \gamma_{i}^* , \; i= 1, \ldots, n-1.\\
& \lambda_v^* p_{0}q_0^*(\alpha_{1}^* - \alpha_{n}^* - \beta_0^* + \beta_1^*) = - \gamma_{0}^*\\
& \lambda_v^* - \sum_{i=1}^{n} \alpha_{i}^*(\lambda_c^{(i)} + \frac{1}{T}) +\beta_0^* ( C n \mu_c - \epsilon_0) + \beta_1^* (\mu_c -\epsilon_1)\\ & - \omega_{n}^*( \sum ^{n}_{i=1}{\lambda_c^{(i)}} + \frac{n}{T})= 0
\end{aligned}
\end{equation}
(\ref{eq:31}) Inserted in the fifth equality in (\ref{eq:291}) gives : 
\begin{equation}\label{eq:26}
\begin{aligned}
& & &\lambda_v^* p_{i}(\alpha_{i+1}^* - \alpha_{i}^* - \beta_0^*)(q_{i}^* - 1)q_{i}^* = 0 , \; i= 1, \ldots, n-1.\\
& & &\lambda_v^* p_{0}(\alpha_{1}^* - \alpha_{n}^* - \beta_0^* + \beta_1^*)(q_{0}^* - 1)q_{0}^* = 0\\
& & & \lambda_v^* = \sum_{i=1}^{n} \alpha_{i}^*(\lambda_c^{(i)} + \frac{1}{T}) -\beta_0^* ( C n \mu_c - \epsilon_0) - \beta_1^* (\mu_c -\epsilon_1)\\ & & & + \omega_{n}^*( \sum ^{n}_{i=1}{\lambda_c^{(i)}} + \frac{n}{T})
\end{aligned}
\end{equation}
From (\ref{eq:26}) we have 
$0 < q_{0}^* <1$ only if $\alpha_{i+1}^* - \alpha_{i}^* - \beta_0^* =0$
And $0 < q_{i}^* <1$ only if $\alpha_{1}^* - \alpha_{n}^* - \beta_0^* + \beta_1^*=0$ 
Since $0 \leq q_{i}^* \leq 1$ then these equalities may not always be true 

if $\alpha_{1}^* - \alpha_{n}^* - \beta_0^* + \beta_1^*>0$ and we know that $\gamma_{0}^* \geq 0$ then $\gamma_{0}^* = 0$ which gives $q_{0}^* \ne 1 $ and $q_{0}^* = 0 $.

if $\alpha_{i+1}^* - \alpha_{i}^* - \beta_0^* > 0$ and we know that $\gamma_{i}^* \geq 0$ then $\gamma_{i}^* = 0$ which gives $q_{i}^* \ne 1 $ and $q_{i}^* = 0 $

if $\alpha_{1}^* - \alpha_{n}^* - \beta_0^* + \beta_1^* < 0$ then $\gamma_{0}^* > 0$ (it cannot be 0 because this will contradict with the value of $q_{i}$), which implies that $q_{0}^* = 1 $. 

if $\alpha_{i+1}^* - \alpha_{i}^* - \beta_0^* < 0$ then $\gamma_{i}^* > 0$ (it cannot be 0 because this contradicts with the value of $q_{i}$), which implies that $q_{i}^* = 1 $

We have also from the KKT conditions given by equation in in (\ref{eq:291}) that says either the Lagrangian coefficient is 0 or its the associated inequality is an equality: \\
if $\beta_1^* \ne 0 $ we have $q_{0}^* =\frac{\mu_c}{\lambda_v^* p_0}$\\
if $\alpha_{n}^* \ne 0$ we have $ q_{0}^* =\frac{p_{n-1}{q_{n-1}^* - p_{n-1} }}{p_{0}} + \frac{ \lambda_c^{(n)} + \frac{1}{T}}{\lambda_v p_{0} } \; $\\
if $\alpha_{i}^* \ne 0$ , we have $ q_{i}^* =\frac{p_{i-1}{q_{i-1}^*} -p_{i-1}}{p_{i}} + \frac{\lambda_c^{(i)} + \frac{1}{T}}{\lambda_v p_{i} } $ \\for $i= 1, \ldots, n-1$\\

Otherwise by the Lagrangian relaxation: \\
$q_{i}^* = \zeta_i(\alpha^*,\beta^*,\gamma^*,\lambda_v^*,q^*)$ for $i= 1, \ldots, n-1$\\ 
Where $\zeta_i(\alpha^*,\beta^*,\gamma^*,\lambda_v^*,q^*)$ is the solution that that maximize the function $\underset{\mathbf{q}}\inf~ L(\mathbf{q},\alpha^*,\beta^*,\gamma^*,\lambda_v^*)$\\

Now in order to find the expression of $\lambda_v^*$ we first look at the last equation in (\ref{eq:291}). From there we can say that if 
$\omega_n^* \ne 0$ then $\lambda_v^* = \sum ^{n}_{i=1}{\lambda_c^{(i)}} + \frac{n}{T} $\\

Otherwise from the third equation in (\ref{eq:26})
if $\omega_n^* = 0$ then $ \lambda_v^* =  \sum_{i=1}^{n} \alpha_{i}^*(\lambda_c^{(i)} + \frac{1}{T}) -\beta_0^* ( C n \mu_c - \epsilon_0) - \beta_1^* (\mu_c -\epsilon_1) $

\begin{thebibliography}{1}


\bibitem{ref5}
W. J. Mitchell, C. E. Borroni-Bird, and L. D. Burns, ``Reinventing the
Automobile: Personal Urban Mobility for the 21st Century''. Cambridge,
MA: The MIT Press, 2010.

\bibitem{ref6}
D. Schrank, B. Eisele, and T. Lomax, ``TTIs 2012 Urban Mobility Report,'' \emph{ Texas A\&M Transportation Institute}, Texas, USA.2012.

\bibitem{ref3}
U. N. E. Programme, ``The Emissions Gap Report 2013 - UNEP,''
\emph{Tech. Rep.}, 2013.


\bibitem{ref4}
U. E. P. Agency, ``Greenhouse Gas Equivalencies Calculator,'' \emph{Tech.Rep.}, 2014. [Online]: http://www.epa.gov/cleanenergy/energy-resources/refs.html



\bibitem{ref11}
``IoT And Smart Cars: Changing The World For The Better,'' \emph{Digitalist Magazine}, August
30, 2016. [Online]: http://www.digitalistmag.com/iot/2016/08/30/iot-smart-connected-cars-willchange-world-04422640

\bibitem{ref9}
``Transportation Outlook: 2025 to 2050,'' \emph{Navigant Research, Q2’16}, 2016. [Online]:
http://www.navigantresearch.com/research/transportation-outlook-2025-to-2050.

\bibitem{ref10}
``The Future Is Now: Smart Cars And IoT In Cities,'' \emph{Forbes}, June 13, 2016. [Online]: http://www.forbes.com/sites/pikeresearch/2016/06/13/the-future-is-now-smartcars/63c0a25248c9

\bibitem{ref14}
``Fog Computing and the Internet of Things: Extend the Cloud to Where the Things Are,'' \emph{Cisco White Paper}, 2015. [Online]:http://www.cisco.com/c/dam/en\_us/solutions/trends/iot/docs/computing-overview.pdf

\bibitem{ref1}
R. Zhang, K. Spieser, E. Frazzoli, and M. Pavone, ``Models, Algorithms, and Evaluation for Autonomous Mobility-On-Demand Systems,'' \emph{in Proc. of American Control Conf.}, Chicago, Illinois, 2015.

\bibitem{ref12}
R. Zhang, F. Rossi, and M. Pavone, ``Model Predictive Control of Autonomous Mobility-on-Demand Systems,'' \emph{in Proc. IEEE Conf. on Robotics and Automation}, Stockholm, Sweden, 2016.

\bibitem{ref15}
H. Liang, I. Sharma, W. Zhuang, and K. Bhattacharya,``Plug-in Electric Vehicle Charging Demand Estimation based on Queueing Network Analysis,'' \emph{IEEE Power and Energy Society General Meeting}, 2014.

\bibitem{ref16}
K. Zhang, Y. Mao, S. Leng, Y. Zhang, S. Gjessing, and D.H.K. Tsang, ``Platoon-based Electric Vehicles Charging with Renewable Energy Supply: A Queuing Analytical Model,'' \emph{in Proc. of IEEE International Conference on Communications (ICC’16)}, 2016.

\bibitem{qcqp} 
S. Boyd and J. Park, ``General Heuristics for Nonconvex Quadratically Constrained Quadratic Programming'. \emph{Stanford University}, 2017.
\bibitem{ref13}
S. Belakaria, M. Ammous, S. Sorour, and A. Abdel-Rahim, ``A Multi-Class Dispatching and Charging Scheme for Autonomous Electric Mobility On-Demand,'',\emph{IEEE Vehicular Technology Conference (VTC2017-Fall)}, Toronto, Canada, 2017.
\bibitem{ref17}
M. Ammous, S. Belakaria, S. Sorour, and A. Abdel-Rahim, ``Optimal Routing with In-Route Charging of Mobility-on-Demand Electric Vehicles'',\emph{IEEE Vehicular Technology Conference (VTC2017-Fall)}, Toronto, Canada, 2017.
\bibitem{ref18}
S. Belakaria, M. Ammous, S. Sorour, and A. Abdel-Rahim, ``Optimal Vehicle Dimensioning for Multi-Class Autonomous Electric Mobility On-Demand Systems,'' \emph{ArXiv e-Prints}, 2017.
\end{thebibliography}
\end{document}